\newcommand{\keywords}[1]{\par\addvspace\baselineskip
\noindent\keywordname\enspace\ignorespaces#1}
\begin{document}

\mainmatter

\title{Efficient Algorithms for Approximate Triangle Counting}

\titlerunning{Efficient Algorithms for Approximate Triangle Counting}

\author{Mostafa Haghir Chehreghani}

\authorrunning{Mostafa Haghir Chehreghani}

\institute{Department of Computer Science,\\
Katholieke Universiteit Leuven, Belgium\\
Mostafa.HaghirChehreghani@cs.kuleuven.be\\
}

\maketitle

\begin{abstract}
Counting the number of triangles in a graph has many important applications in network analysis.
Several frequently computed metrics like the clustering coefficient and the transitivity ratio
need to count the number of triangles in the network.
Furthermore, triangles are one of the most important graph classes considered in network mining.
In this paper, we present a new randomized algorithm for approximate triangle counting.
The algorithm can be adopted with different sampling methods and give effective triangle counting methods.
In particular, 
we present two sampling methods, called the \textit{$q$-optimal sampling} and the \textit{edge sampling},
which respectively give $O(sm)$ and $O(sn)$ time algorithms with nice error bounds
($m$ and $n$ are respectively the number of edges and vertices in the graph and $s$ is the number of samples).
Among others, we show, for example,
that if an upper bound $\widetilde{\Delta^e}$ is known for the number of triangles incident to every edge,
the proposed method provides an $1\pm \epsilon$ approximation
which runs in $O( \frac{\widetilde{\Delta^e} n \log n}{\widehat{\Delta^e} \epsilon^2}  )$ time,
where $\widehat{\Delta^e}$ is the average number of triangles incident to an edge.
Finally we show that the algorithm can be adopted with streams.
Then it, for example, will perform 2 passes over the data (if the size of the graph is known, otherwise it needs 3 passes)
and will use $O(sn)$ space.
\end{abstract}

\keywords{Graphs, triangles, approximate algorithms, stream data, network analysis, complexity.}

\section{Introduction}
\label{section:introduction}

Graphs are fundamental structures for modeling complex relationships between data.
Examples include: Internet (where vertices are routers and edges correspond to physical links),
World Wide Web (where vertices are web pages and edges correspond to hyperlinks),
social networks (where vertices are humans and edges correspond to friendships),
traffic data (where vertices are places or cities and edges correspond to roads) and
biological networks (where vertices are proteins and edges correspond to protein interactions).

The problem of counting subgraphs of a certain class,
is one of the typical problems in graph mining which in recent years has obtained considerable attentions.
Triangles are one of the most important basic subgraphs.
On the other hand, computation of several network indices and statistics are based on counting the number of triangles, 
which makes triangle counting an essential problem in network analysis. 
\textit{Clustering coefficient} of a graph \cite{watts:jrnl} is defined as the normalized sum of the fraction of neighbor pairs
of a vertex of the graph that are connected.
\textit{Transitivity coefficient} of a graph \cite{harary:jrnl}, is defined as the ratio between
three times the number of triangles and the number of length two paths in the graph.

In terms of time complexity, the most efficient triangle counting algorithms are based on matrix multiplication.
If $A$ is the adjacency matrix of a graph $G$, the number of triangles in $G$ is equal to 
\begin{equation}
\label{eq:matrixmultiplication1}
\frac{1}{6} \mathbf{Tr} (A^3)  
\end{equation}
where $\mathbf{Tr}( \cdot )$ denotes the \textit{trace} of a matrix
defined as the sum of the elements on the main diagonal of the matrix.

Time complexity of the most efficient known algorithm for matrix multiplication is $O(n^{2.3727})$ \cite{coppersmith:jrnl}, \cite{williams:proceedings},
where $n$ is the number of vertices of the graph (the number of rows/columns of $A$).
The exponent of $n$, denoted by $\omega$, is called \textit{matrix multiplication exponent}.
Alon et al. give in \cite{alon:jrnl} a more efficient triangle counting algorithm for sparse graphs.
Time complexity of their method is $O(m^{\frac{2\omega}{\omega+1}})$, where $m$ is the number of edges of the graph.
For $\omega=2.3727$, this time complexity is equal to $O(m^{1.41})$.

However, exact triangle counting methods may be inefficient when the size of the graph is large.
In these cases, an approximate algorithm is preferred in the cost of losing the exact number of triangles.
In recent years, many algorithms have been proposed for approximate triangle counting.
A widely used technique is the \textit{sparsification technique} \cite{tsourakakis:jrnl}, \cite{tsourakakis:proceedings}, \cite{pagh:jrnl} and \cite{kolountzakis:jrnl}.
In this technique, the graph is converted into a sparse graph and 
the number of triangles in the sparsified graph is counted.
Then, the result is scaled to the original graph.
Some other methods are based on approximate computation of algebraic properties of the graph
like \textit{eigenvalues} of the adjacency matrix \cite{tsourakakis:icdm}.
However, well-known methods for computing (or approximating) eigenvalues are heavily based on matrix multiplication
and it is known that worst case time complexity of computing eigenvalues is the same as matrix multiplication.
On the hand, such approximate triangle counting algorithms look at the algebraic methods
(like approximate matrix multiplication and low rank matrix approximation) as a black-box. 
However, samplings in the algebraic methods are done in a way to minimize the element-wise error or
the frobenius norm of the error matrix
and therefore, the error of triangle counting is not minimized.

In this paper, we propose a new randomized algorithm for approximate triangle counting.
Our method is a variation of several approximate matrix multiplication algorithms
\cite{drineas:soda}, \cite{drineas2:jrnl} and \cite{drineas:jrnl}.
However, it does not generate any product matrix and several algorithmic aspects are different.
Furthermore, the sampling methods which are crucial elements of the algorithm, are also different.
The proposed algorithm can be seen as a general framework to which different sampling methods can be applied.
Every sampling method gives a new triangle counting algorithm with its own error bounds and time complexity. 
In particular, we present two sampling methods, called the \textit{$q$-optimal sampling} and the \textit{edge sampling},
which respectively give $O(sm)$ and $O(sn)$ time algorithms with nice error bounds
($m$ and $n$ are respectively the number of edges and vertices in the graph and $s$ is the number of samples).
Among others, we show, for example,
that if an upper bound $\widetilde{\Delta^e}$ is known for the number of triangles incident to every edge,
the proposed method provides an $1\pm \epsilon$ approximation
which runs in $O( \frac{\widetilde{\Delta^e} n \log n}{\widehat{\Delta^e} \epsilon^2}  )$ time,
where $\widehat{\Delta^e}$ is the average number of triangles incident to an edge.
As we will discuss, some existing algorithms can be seen as adoptations of the proposed algorithm with specific sampling methods.
We finally show that the algorithm can be extended to streams.
Then it, for example, will perform 2 passes over the data (if the size of the graph is known, otherwise it needs 3 passes)
and will use $O(sn)$ memory cells.  

The rest of this paper is organized as follows.
In Section \ref{section:approximatealgorithm}, we present a new randomized algorithm for approximate triangle counting.
In Section \ref{sec:samplings}, different sampling methods are introduced.
In Section \ref{sec:stream} we extend the algorithm for counting triangles in streams.
An overview of related work is provided in Section \ref{sec:relatedwork}.
Finally, the paper is concluded in Section \ref{sec:conclusions}.

Throughout the paper, $G$ refers to a simple (i.e. loop-free and without multiple edges) and undirected graph.  
$A$ refers to the adjacency matrix of $G$. Therefore, $A$ is a square matrix consisting of $0$s and $1$s.
$A_{ij}$ denotes the element in the $i$-th row and the $j$-th column of $A$.
$n$ and $m$ denote the number of vertices of $G$ (the number of rows and the number of columns of $A$)
and the number of edges of $G$, respectively. $\Delta$ denotes the number of triangles in $G$.
In this paper, we use an index $i$ for referring to a row (column) in the adjacency matrix as well as
for referring to the vertex of the graph corresponds to the row (column) $i$. 

\section{The approximate triangle counting algorithm}
\label{section:approximatealgorithm}

In this section, we present a randomized algorithm for approximate triangle counting.
Suppose $A$ is an $n \times n$ matrix which is the adjacency matrix of a graph $G$.
As Equation \ref{eq:matrixmultiplication1} shows, in order to count triangles in $G$, we can compute the trace of $A^3$.
Algorithm \ref{alg:approximate1} shows the high level pseudo code of an approximate triangle counting algorithm,
based on randomized calculation of the trace of $A^3$.

In every iteration $t$ of the loop in Lines \ref{alg:approximate1:loop1}-\ref{alg:approximate1:loop2} of Algorithm \ref{alg:approximate1}, 
first the following probabilities are computed:
\begin{equation*}
p_1,p_2, \hdots, p_n \geq 0 \text{ such that } \sum_{i=1}^n p_i=1  
\end{equation*}

Then, an $i \in \{ 1,\hdots, n\}$ is selected with probability $p_i$,
and the following probabilities are computed:
\begin{equation*}
q_{1|i}, \hdots, q_{n|i} \geq 0 \text{ such that } \sum_{j=1}^n q_{j|i}=1
\end{equation*}

Finally, an $j \in \{1,\hdots,n\}$ is selected with probability $q_{j|i}$ and the number of triangles is estimated by 
\begin{equation}
\label{eq:betat}
\beta_t=\frac{\sum_{d=1}^{n}A_{di}A_{ij}A_{jd}}{6p_i q_{j|i}} 
\end{equation}

The final estimation, $\beta$, is the average of the estimations of different trials.


\begin{algorithm}
\caption{High level pseudo code of the approximate triangle counting algorithm.}
\label{alg:approximate1}
\mbox{\textsc{TriangleCounter}}
\begin{algorithmic} [1]
\REQUIRE {A graph $G$}.
\ENSURE The approximate number of triangles in $G$.
\STATE \COMMENT{Let $A$ be the $n \times n$ adjacency matrix of $G$}
\STATE $ \beta \leftarrow 0$
\FOR {$t=1$ \textbf{to} $s$} \label{alg:approximate1:loop1}
\STATE Compute $p_1,\hdots,p_n$
\STATE Select $i \in \{ 1, \hdots, n\}$ with the probability $p_i$
\STATE Compute $q_{1|i},\hdots,q_{n|i}$
\STATE Select $j \in \{ 1, \hdots, n\}$ with the probability $q_{j|i}$
\STATE $\beta_t \leftarrow \frac{\sum_{d=1}^{n}A_{di}A_{ij}A_{jd}}{6p_i q_{j|i}}$ \label{alg:approximate1:line1}
\STATE $\beta \leftarrow \beta + \beta_t$
\ENDFOR \label{alg:approximate1:loop2}
\RETURN $\frac{\beta}{s} $
\end{algorithmic}
\end{algorithm}

In the rest of this section, we study some important properties of Algorithm \ref{alg:approximate1}.  

\begin{lemma}
\label{lemma:lemma1}
In Algorithm \ref{alg:approximate1}, for every $t \in \{1, \hdots, s\}$ we have:
\begin{equation}
\mathbf E\big(\beta_t\big)= \mathbf E\big( \beta \big)= \frac{\mathbf {Tr}(A^3)}{6}
\end{equation}
\end{lemma}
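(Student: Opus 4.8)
The plan is to compute $\mathbf E(\beta_t)$ directly from the two-stage sampling distribution and then recover $\mathbf E(\beta)$ by linearity of expectation. In a single trial the index $i$ is drawn with probability $p_i$ and, conditioned on that choice, the index $j$ is drawn with probability $q_{j|i}$; hence the ordered pair $(i,j)$ is realized with probability $p_i\, q_{j|i}$. Since the value of $\beta_t$ in Equation \ref{eq:betat} depends only on the sampled pair $(i,j)$, I would write
\begin{equation*}
\mathbf E(\beta_t)=\sum_{i=1}^{n}\sum_{j=1}^{n} p_i\, q_{j|i}\cdot\frac{\sum_{d=1}^{n}A_{di}A_{ij}A_{jd}}{6\,p_i\, q_{j|i}}.
\end{equation*}

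The key step is the cancellation of $p_i\, q_{j|i}$ between the sampling probability and the denominator of the estimator, which leaves
\begin{equation*}
\mathbf E(\beta_t)=\frac{1}{6}\sum_{i=1}^{n}\sum_{j=1}^{n}\sum_{d=1}^{n}A_{di}A_{ij}A_{jd}.
\end{equation*}
I would then identify this triple sum with the trace of $A^3$: by the definition of matrix multiplication $(A^3)_{dd}=\sum_{i=1}^{n}\sum_{j=1}^{n}A_{di}A_{ij}A_{jd}$, so summing the diagonal entries gives $\mathbf{Tr}(A^3)=\sum_{d=1}^{n}(A^3)_{dd}=\sum_{i,j,d}A_{di}A_{ij}A_{jd}$. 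This yields $\mathbf E(\beta_t)=\mathbf{Tr}(A^3)/6$, establishing the first equality.

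For the second equality I would use that $\beta$ is the average $\frac{1}{s}\sum_{t=1}^{s}\beta_t$ of the $s$ trials. By linearity of expectation, which requires no independence assumption, $\mathbf E(\beta)=\frac{1}{s}\sum_{t=1}^{s}\mathbf E(\beta_t)=\frac{1}{s}\cdot s\cdot\frac{\mathbf{Tr}(A^3)}{6}=\frac{\mathbf{Tr}(A^3)}{6}$, since each trial shares the common expectation computed above.

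The step that needs care is the cancellation: it is valid only when $p_i\, q_{j|i}>0$ for every pair $(i,j)$ that can contribute a nonzero term, i.e. whenever $\sum_{d}A_{di}A_{ij}A_{jd}\neq 0$. The main obstacle is therefore to argue that the sampling distributions place positive mass on all such pairs, so that no triangle-carrying term is silently dropped when the expectation sum is restricted to the support of the distribution. Any admissible sampling scheme used with the algorithm must respect this condition for the estimator to remain unbiased, and I would flag it explicitly rather than leave it implicit in the $p_i\, q_{j|i}$ cancellation.
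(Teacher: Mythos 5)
Your proof is correct and follows essentially the same route as the paper's: expand $\mathbf E(\beta_t)$ over the two-stage sampling distribution, cancel $p_i q_{j|i}$ against the denominator to recover $\frac{1}{6}\mathbf{Tr}(A^3)$, and then apply linearity of expectation for $\beta$ (the paper invokes independence here, but as you note only linearity is needed). Your explicit caveat that the cancellation requires $p_i q_{j|i}>0$ on every pair $(i,j)$ contributing a nonzero term is a legitimate condition the paper leaves implicit, and it is indeed satisfied by the sampling schemes proposed later.
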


\begin{proof}
For every $t \in \{1, \hdots, s\}$, we have:
\begin{align}
\mathbf E(\beta_t) &= \sum_{i=1}^{n} \sum_{j=1}^{n} \left( \frac{\sum_{d=1}^{n} A_{di}A_{ij}A_{jd} }{6p_{i} q_{j|i}}p_ip_{j|i} \right) \nonumber\\
		   &= \frac{1}{6} \sum_{i=1}^{n} \sum_{j=1}^{n} \sum_{d=1}^{n} A_{di}A_{ij}A_{jd} \nonumber\\
		   &= \frac{1}{6} \mathbf{Tr}(A^3) \nonumber
\end{align}
On the other hand, $\beta_t$'s are independent random variables
and $\beta$ is the sum of $s$ independent random variables $\beta_1 \hdots \beta_s$ divided by $s$.
Therefore
\begin{equation*}
\mathbf E(\beta) = \frac{\mathbf E (\sum_{t=1}^s \beta_t) }{s}= \frac{ s \mathbf E(\beta_t)}{s}= \mathbf E (\beta_t)= \frac{\mathbf {Tr}(A^3)}{6}
\end{equation*}
$\blacksquare$
\end{proof}

\begin{lemma}
In Algorithm \ref{alg:approximate1}, variance of every random variable $\beta_t$ is
\begin{equation}
\label{eq:variancet}
\mathbf{Var} (\beta_t) =
\frac{1}{36} \left(  \sum_{i=1}^{n} \sum_{j=1}^{n} \frac{ \big( \sum_{d=1}^{n} A_{di}A_{ij}A_{jd} \big)^2 }{ p_i q_{j|i}} - \big( \mathbf{Tr}(A^3) \big)^2  \right) 
\end{equation}
\end{lemma}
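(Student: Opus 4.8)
The plan is to compute the variance directly from the definition $\mathbf{Var}(\beta_t)=\mathbf{E}(\beta_t^2)-\big(\mathbf{E}(\beta_t)\big)^2$, using the first moment already established in Lemma \ref{lemma:lemma1}. Since $\mathbf{E}(\beta_t)=\frac{1}{6}\mathbf{Tr}(A^3)$, the second term is immediately $\big(\mathbf{E}(\beta_t)\big)^2=\frac{1}{36}\big(\mathbf{Tr}(A^3)\big)^2$, which already accounts for the subtracted quantity inside the parentheses of the target formula. So the whole task reduces to evaluating the second moment $\mathbf{E}(\beta_t^2)$.

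To compute $\mathbf{E}(\beta_t^2)$, I would recall that in a single trial the pair $(i,j)$ is drawn from the joint distribution in which $i$ is chosen with probability $p_i$ and then $j$ with conditional probability $q_{j|i}$, so the joint probability of the pair is exactly $p_i\,q_{j|i}$. Conditioned on $(i,j)$, the estimator takes the deterministic value $\beta_t=\frac{\sum_{d=1}^{n}A_{di}A_{ij}A_{jd}}{6\,p_i\,q_{j|i}}$. Therefore I would write
\begin{equation*}
\mathbf{E}(\beta_t^2)=\sum_{i=1}^{n}\sum_{j=1}^{n}\left(\frac{\sum_{d=1}^{n}A_{di}A_{ij}A_{jd}}{6\,p_i\,q_{j|i}}\right)^{2} p_i\,q_{j|i}.
\end{equation*}
The key algebraic step is the cancellation: squaring produces $(p_i q_{j|i})^2$ in the denominator, and multiplying by the single factor $p_i q_{j|i}$ coming from the probability weight leaves exactly one factor of $p_i q_{j|i}$ in the denominator, giving $\mathbf{E}(\beta_t^2)=\frac{1}{36}\sum_{i=1}^{n}\sum_{j=1}^{n}\frac{\big(\sum_{d=1}^{n}A_{di}A_{ij}A_{jd}\big)^2}{p_i q_{j|i}}$.

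Finally I would subtract the two pieces and factor out $\frac{1}{36}$ to obtain the claimed expression (\ref{eq:variancet}). I do not expect a genuine obstacle here, since this is a routine second-moment computation; the only point requiring care is the bookkeeping of the weight $p_i q_{j|i}$, both to identify it as the correct joint sampling probability and to track how exactly one copy of it survives after the squaring. A subsidiary remark worth making is that the variance is nonnegative, which follows from Cauchy--Schwarz (or Jensen), confirming the formula is consistent; this also foreshadows the later sampling analysis, where the freedom in choosing $p_i$ and $q_{j|i}$ is exploited precisely to make this variance small.
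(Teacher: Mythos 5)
Your proposal is correct and follows essentially the same route as the paper: compute $\mathbf{E}(\beta_t^2)$ by summing the squared estimator against the joint weight $p_i q_{j|i}$, note that exactly one factor of $p_i q_{j|i}$ survives in the denominator, and subtract $\big(\mathbf{E}(\beta_t)\big)^2=\frac{1}{36}\big(\mathbf{Tr}(A^3)\big)^2$ from Lemma \ref{lemma:lemma1}. No gaps; your bookkeeping of the joint probability is in fact slightly more careful than the paper's, which writes the weight as $p_i p_{j|i}$ in a minor notational slip.
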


\begin{proof}
We have:
\begin{equation*}
\mathbf{Var}(\beta_t) = \mathbf E(\beta_t^2) - \big(\mathbf E(\beta_t) \big)^2
\end{equation*}
Then
\begin{align}
\mathbf E(\beta_t^2)= \sum_{i=1}^{n} \sum_{j=1}^{n} \big( \frac{\sum_{d=1}^{n} A_{di}A_{ij}A_{jd} }{6p_{i} q_{j|i}} \big)^2  p_i p_{j|i} = \frac{1}{36} \sum_{i=1}^{n} \sum_{j=1}^{n} \frac{ \big( \sum_{d=1}^{n} A_{di}A_{ij}A_{jd} \big)^2 }{ p_i q_{j|i}} \label{eq:var1}
\end{align}
and
\begin{align}
\big(\mathbf E(\beta_t) \big)^2 = \frac{1}{36} \big( \mathbf{Tr}(A^3) \big)^2 \label{eq:var2}
\end{align}

Therefore
\begin{equation}
\mathbf {Var}(\beta_t) = 
\frac{1}{36} \left(  \sum_{i=1}^{n} \sum_{j=1}^{n} \frac{ \big( \sum_{d=1}^{n} A_{di}A_{ij}A_{jd} \big)^2 }{ p_i q_{j|i}} - \big( \mathbf{Tr}(A^3) \big)^2  \right) 
\end{equation} 

$\blacksquare$
\end{proof}

Since $\beta$ is the average of $s$ independent copies of $\beta_t$, then
\begin{equation}
\label{eq:variance}
\mathbf {Var}(\beta) =  \frac{\mathbf {Var}(\beta_t)}{s}  =  \frac{1}{36s} \left(  \sum_{i=1}^{n} \sum_{j=1}^{n} \frac{ \big( \sum_{d=1}^{n} A_{di}A_{ij}A_{jd} \big)^2 }{ p_i q_{j|i}} - \big( \mathbf{Tr}(A^3) \big)^2  \right)
\end{equation}

For a vertex $i$ of the graph, \textit{local triangles} of $i$
are triangles which are incident to $i$.
The number of local triangles of $i$, denoted by $\Delta_i$, equals to
\[\Delta_i = \frac{1}{2}\sum_{j=1}^n \sum_{d=1}^n  A_{di}A_{ij}A_{jd}\]
Let $\{i,j\}$ be an edge of the graph.
\textit{Local triangles} of $\{i,j\}$ are triangles for which $\{i,j\}$ is an edge.
The number of local triangles of $\{i,j\}$, denoted by $\Delta_{\{i,j\}}$,
is equal to
\[\Delta_{\{i,j\}} = \sum_{d=1}^n A_{di}A_{ij}A_{jd}\]
If $i$ is not connected to $j$, the number of local triangles of $\{i,j\}$ is equal to $0$.
The following holds between $\Delta_i$ and $\Delta_{\{i,j\}}$:
$\Delta_i = \frac{1}{2} \sum_{j=1}^n \Delta_{\{i,j\}}$.
Let $\Delta$ refer to the number of triangles in the graph.
We have: $\Delta= \frac{1}{3} \sum_{i=1}^n \Delta_i$.

Using the notion of local triangles of edges, Equation \ref{eq:variance} can be re-written as:
\begin{equation}
\label{eq:variance1}
\mathbf {Var}(\beta) =  \frac{1}{36s} \sum_{i=1}^{n} \sum_{j=1}^{n} \frac{ {\Delta_{\{i,j\}}}^2 }{ p_i q_{j|i}} -\frac{\Delta^2}{s}  
\end{equation}

\begin{lemma}
\label{theorem:timecomplexity}
If in Algorithm \ref{alg:approximate1} probabilities $p_1,p_2, \hdots, p_n$
and $q_{1|i},q_{2|i}, \hdots, q_{n|i}$ are accessible in constant time, 
its time complexity will be $O(sn)$.
\end{lemma}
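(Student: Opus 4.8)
The plan is to bound the cost of a single iteration of the loop in Lines~\ref{alg:approximate1:loop1}--\ref{alg:approximate1:loop2} by $O(n)$ and then multiply by the number $s$ of iterations. First I would go through the body of the loop line by line. Computing the probabilities $p_1, \ldots, p_n$ costs $O(n)$: by hypothesis each $p_i$ is accessible in constant time, so filling an array of $n$ such values takes $n \cdot O(1) = O(n)$ time, and the same argument gives $O(n)$ for computing $q_{1|i}, \ldots, q_{n|i}$.

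Next I would handle the two sampling steps. To select an index $i \in \{1, \ldots, n\}$ according to the distribution $(p_1, \ldots, p_n)$, one standard approach is to form the cumulative sums $P_k = \sum_{\ell=1}^{k} p_\ell$, draw a single uniform random number $u \in [0,1]$, and return the smallest $k$ with $u \le P_k$. Building the prefix sums is $O(n)$ and locating $k$ is at most $O(n)$ (a linear scan, or $O(\log n)$ with binary search), so the whole draw is $O(n)$; selecting $j$ from $(q_{1|i}, \ldots, q_{n|i})$ is identical.

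Then I would account for the evaluation of $\beta_t$ in Line~\ref{alg:approximate1:line1}. The numerator is $\sum_{d=1}^{n} A_{di} A_{ij} A_{jd}$, a sum of $n$ terms, each a product of three adjacency-matrix entries retrievable in $O(1)$; hence the numerator costs $O(n)$. The denominator $6 p_i q_{j|i}$ and the final division are $O(1)$, and the update $\beta \leftarrow \beta + \beta_t$ is $O(1)$. Adding these up, every operation inside the loop body is $O(n)$, so one iteration costs $O(n)$.

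Finally, since the loop runs $s$ times, the total running time is $s \cdot O(n) = O(sn)$, as claimed. I do not expect any genuine obstacle here; the only point that deserves care is the sampling step, where one must confirm that drawing from an arbitrary discrete distribution over $n$ outcomes can be carried out in $O(n)$ time once the probabilities (and hence their prefix sums) are available---which is exactly what the constant-time-access hypothesis guarantees.
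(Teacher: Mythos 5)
Your proof is correct and follows essentially the same route as the paper: bound each iteration of the loop by $O(n)$ (with Line~\ref{alg:approximate1:line1} as the dominant cost) and multiply by $s$. You are in fact somewhat more careful than the paper, which silently assumes the two sampling draws can be done within $O(n)$; your prefix-sum argument fills that small gap explicitly.
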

\begin{proof}
The loop in Lines \ref{alg:approximate1:loop1}-\ref{alg:approximate1:loop2} is performed for $s$ times.
Inside the loop, the most time consuming step is Line \ref{alg:approximate1:line1} which takes $O(n)$ time.
Therefore, time complexity of the algorithm is $O(sn)$. $\blacksquare$
\end{proof}

%
%
%

\section{Sampling methods}
\label{sec:samplings}

In this section, we present a number of sampling methods.
First in Section \ref{sec:optimalsampling} the optimal sampling is investigated.
Then, since the optimal sampling might be computationally expensive, other near-optimal samplings are introduced. 

\subsection{Optimal sampling}
\label{sec:optimalsampling}

Lemma \ref{theorem:optimalsamplingvariance}
introduces the probabilities and error bound of the optimal sampling.

\begin{lemma}
\label{theorem:optimalsamplingvariance}
If for $1 \leq i \leq n$, $p_i$'s are equal to 
\begin{equation}
\label{eq:pioptimal}
p_i=\frac{ \Delta_i }{3 \Delta }
\end{equation}
and then after selecting $i$, for $1 \leq j \leq n$, $q_{j|i}$'s are
\begin{equation}
q_{j|i}=\frac{\Delta_{\{i,j\}}}{2\Delta_i} 
\end{equation}
the variance of $\beta$ presented in Equation \ref{eq:variance} is minimized.
The minimized variance is $0$.
\end{lemma}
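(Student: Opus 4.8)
The plan is to exploit the fact that in Equation \ref{eq:variance1} the subtracted term $\Delta^2/s$ does not depend on the sampling probabilities, so minimizing $\mathbf{Var}(\beta)$ is the same as minimizing the double sum $\sum_{i}\sum_{j} \Delta_{\{i,j\}}^2/(p_i q_{j|i})$. Rather than attacking this constrained minimization directly (e.g.\ via Lagrange multipliers or Cauchy--Schwarz), I would use the elementary fact that the variance of any real random variable is nonnegative, so $0$ is a universal lower bound for $\mathbf{Var}(\beta)$ over all admissible probability distributions. It then suffices to exhibit one choice attaining this bound; the stated probabilities will be shown to be such a choice, which makes them optimal.

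First I would check that the proposed $p_i$ and $q_{j|i}$ are genuine probability distributions. Using the two identities established just before the lemma, namely $\sum_{j=1}^n \Delta_{\{i,j\}} = 2\Delta_i$ and $\sum_{i=1}^n \Delta_i = 3\Delta$, one gets $\sum_i p_i = \sum_i \Delta_i/(3\Delta) = 1$ and $\sum_j q_{j|i} = \sum_j \Delta_{\{i,j\}}/(2\Delta_i) = 1$, so both requirements $p_i,q_{j|i}\geq 0$ with unit sums hold. Next I would substitute into the variance. The key computation is that $p_i q_{j|i} = \frac{\Delta_i}{3\Delta}\cdot\frac{\Delta_{\{i,j\}}}{2\Delta_i} = \frac{\Delta_{\{i,j\}}}{6\Delta}$, so each summand becomes $\Delta_{\{i,j\}}^2/(p_i q_{j|i}) = 6\Delta\,\Delta_{\{i,j\}}$. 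Summing over all $i,j$ and applying the same two identities gives $\sum_i\sum_j \Delta_{\{i,j\}}^2/(p_i q_{j|i}) = 6\Delta \sum_i\sum_j \Delta_{\{i,j\}} = 6\Delta\cdot 6\Delta = 36\Delta^2$. Plugging this into Equation \ref{eq:variance1} yields $\mathbf{Var}(\beta) = \frac{1}{36s}\cdot 36\Delta^2 - \frac{\Delta^2}{s} = 0$, so the lower bound is attained and the claim follows.

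The main subtlety to handle carefully is the division by $p_i q_{j|i}$ when $\{i,j\}$ is not an edge, since then $\Delta_{\{i,j\}} = 0$ forces $q_{j|i} = 0$ and the cancellation above is formally a $0/0$. I would resolve this by reading the double sum only over pairs with $\Delta_{\{i,j\}} > 0$: in Equation \ref{eq:variance1} every term with $\Delta_{\{i,j\}} = 0$ has numerator $0$ and so contributes nothing, and such a pair $(i,j)$ is selected by the algorithm with probability $0$, so it is harmless. An alternative, perhaps more illuminating, route to the same conclusion is to observe directly that for this choice the per-trial estimator collapses to a constant: $\beta_t = \Delta_{\{i,j\}}/(6 p_i q_{j|i}) = \Delta$ for every sampled edge, whence each $\beta_t$, and therefore $\beta$, is deterministic and has zero variance.
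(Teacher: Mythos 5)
Your proposal is correct. Note first that the paper states this lemma without giving any proof at all, so there is no direct argument to compare against; the closest in-paper analogue is the proof of Lemma \ref{lemma:psemiuniform}, which establishes optimality of $q_{j|i}=\Delta_{\{i,j\}}/(2\Delta_i)$ by substituting the normalization constraint and setting partial derivatives of $\sum_j \Delta_{\{i,j\}}^2/q_{j|i}$ to zero, i.e.\ a first-order stationarity computation. Your route is genuinely different and, I think, preferable here: since $\mathbf{Var}(\beta)\geq 0$ for any admissible choice of probabilities, exhibiting one choice that achieves variance $0$ immediately certifies \emph{global} optimality, with no calculus and no need to argue that a stationary point is a minimum. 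Your verification is also sound: $p_i q_{j|i}=\Delta_{\{i,j\}}/(6\Delta)$, so each nonzero summand equals $6\Delta\,\Delta_{\{i,j\}}$, and $\sum_i\sum_j \Delta_{\{i,j\}} = \sum_i 2\Delta_i = 6\Delta$ gives $36\Delta^2$, which cancels the $\Delta^2/s$ term exactly; the equivalent observation that $\beta_t\equiv\Delta$ is a constant is the cleanest way to see it. Your handling of the $0/0$ terms for non-edges is appropriate. Two further degenerate cases worth a sentence: if $\Delta_i=0$ then $q_{j|i}$ is formally $0/0$, but $p_i=0$ means vertex $i$ is never selected, so any conditional distribution may be assigned there; and if $\Delta=0$ the prescribed $p_i$ are undefined, but then every $\Delta_{\{i,j\}}=0$ and $\beta_t=0$ deterministically for any sampling, so the claim is vacuously fine.
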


Since the error bound of the optimal sampling is $0$, it gives an exact triangle counting algorithm.
On the other hand, time complexity of computation of $p_i$'s in Equation \ref{eq:pioptimal}
is the same as time complexity of exact triangle counting.
Therefore, using Algorithm \ref{alg:approximate1} with the optimal sampling
is the same (in both accuracy and complexity) as using an exact triangle counting algorithm.

\subsection{$q$-optimal sampling}

In the $q$-optimal sampling, 
every vertex $i$, $1 \leq i \leq n $, is selected by some strategy (which can be performed in $O(1)$ time).
For example, they are selected uniformly at random (therefore $p_i=\frac{1}{n}$),
or they are selected proportional to their degrees
(therefore, $p_i=\frac{\mathbf{deg}(i)}{2m}$, where $\mathbf{deg}(i)$ refers to the degree of $i$).
Then, every vertex $j$ is selected in a way to minimize $\mathbf{Var} (\beta)$.



\begin{lemma}
\label{lemma:psemiuniform}
In the $q$-optimal sampling,
after choosing a vertex $i$,
if every vertex $j$ is selected with probability 
\begin{equation}
\label{eq:qoptimalprobability}
q_{j|i}=\frac{\Delta_{\{i,j\}} } { 2\Delta_i } 
\end{equation}
the variance of $\beta$ is minimized.
\end{lemma}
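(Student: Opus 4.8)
The plan is to treat this as a constrained minimization of the variance in Equation \ref{eq:variance1} over the conditional probabilities $q_{j|i}$, with the $p_i$'s held fixed (as prescribed by the $q$-optimal sampling). First I would note that the additive term $-\Delta^2/s$ is independent of the $q_{j|i}$ and may be dropped, and that the remaining double sum decouples across the outer index $i$: for each fixed $i$ the only coupling among the $q_{j|i}$ is the normalization constraint $\sum_{j=1}^n q_{j|i} = 1$. Hence it suffices to minimize, for each $i$ separately, the quantity $\frac{1}{p_i}\sum_{j=1}^n \frac{\Delta_{\{i,j\}}^2}{q_{j|i}}$ subject to $\sum_{j=1}^n q_{j|i} = 1$ and $q_{j|i} \ge 0$. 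Since $1/p_i$ is a fixed positive scalar, this reduces to minimizing $\sum_{j=1}^n \Delta_{\{i,j\}}^2 / q_{j|i}$.

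Second, I would carry out this minimization via Cauchy--Schwarz, which is the cleanest route and simultaneously yields the minimizer and the optimal value. Applying the inequality to the vectors with entries $\Delta_{\{i,j\}}/\sqrt{q_{j|i}}$ and $\sqrt{q_{j|i}}$ gives
\[
\Big(\sum_{j=1}^n \Delta_{\{i,j\}}\Big)^2 \le \Big(\sum_{j=1}^n \frac{\Delta_{\{i,j\}}^2}{q_{j|i}}\Big)\Big(\sum_{j=1}^n q_{j|i}\Big) = \sum_{j=1}^n \frac{\Delta_{\{i,j\}}^2}{q_{j|i}},
\]
so the objective is bounded below by $\big(\sum_j \Delta_{\{i,j\}}\big)^2$, a quantity independent of $q$. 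Equality holds precisely when $\Delta_{\{i,j\}}/q_{j|i}$ is constant in $j$, i.e. when $q_{j|i} \propto \Delta_{\{i,j\}}$. Imposing the normalization and using the identity $\sum_j \Delta_{\{i,j\}} = 2\Delta_i$ (which follows from $\Delta_i = \frac12\sum_j \Delta_{\{i,j\}}$, established earlier) then forces $q_{j|i} = \Delta_{\{i,j\}}/(2\Delta_i)$, which is exactly Equation \ref{eq:qoptimalprobability}. The same answer follows from a Lagrange-multiplier computation on $\sum_j \Delta_{\{i,j\}}^2/q_{j|i} - \lambda\big(\sum_j q_{j|i} - 1\big)$, whose stationarity condition gives $q_{j|i} = \Delta_{\{i,j\}}/\sqrt{-\lambda}$ with $\sqrt{-\lambda}=2\Delta_i$ fixed by the constraint; convexity of $x \mapsto 1/x$ on $(0,\infty)$ certifies the critical point as the global minimum.

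I expect the only real subtlety to be the handling of degenerate terms rather than the optimization itself. When $\Delta_{\{i,j\}} = 0$ the summand $\Delta_{\{i,j\}}^2/q_{j|i}$ vanishes regardless of $q_{j|i}$, so these indices contribute nothing and the Cauchy--Schwarz argument restricts naturally to the support $\{\, j : \Delta_{\{i,j\}} > 0 \,\}$; the stated formula assigns them $q_{j|i}=0$, consistent with the minimizer. The only genuinely degenerate case is a vertex $i$ with $\Delta_i = 0$, for which every $\Delta_{\{i,j\}}$ is zero and the inner sum is identically $0$ for any valid distribution, so no choice of $q_{j|i}$ affects the variance and the probabilities may be defined arbitrarily there. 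Once this support bookkeeping is stated, the remainder of the argument is routine.
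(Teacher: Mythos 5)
Your proof is correct, but it takes a genuinely different route from the paper's. The paper argues via first-order conditions: it eliminates $q_{n|i}$ using the normalization constraint, sets $\partial f/\partial q_{j|i}=0$ for $j<n$, deduces $q_{j|i}\propto \Delta_{\{i,j\}}$, and normalizes. You instead apply Cauchy--Schwarz to the vectors with entries $\Delta_{\{i,j\}}/\sqrt{q_{j|i}}$ and $\sqrt{q_{j|i}}$ to get the lower bound $\bigl(\sum_j \Delta_{\{i,j\}}\bigr)^2 = 4\Delta_i^2$ on the objective, identify the equality case, and normalize. Your route buys two things the paper's does not. First, it certifies \emph{global} optimality in one step: the paper's stationarity argument alone only locates a critical point (the paper does not invoke convexity or a second-order check), whereas Cauchy--Schwarz gives the bound for every feasible distribution at once. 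Second, it handles the degenerate terms cleanly: the paper's derivation divides by $\Delta_{\{i,n\}}$ and by the candidate $q_{j|i}$'s, which breaks down when some $\Delta_{\{i,j\}}=0$ (a common situation, since $\Delta_{\{i,j\}}=0$ for every non-edge) or when $\Delta_i=0$; your support-restriction remark disposes of exactly these cases. The one thing the paper's proof arguably buys is that it is fully self-contained computationally, requiring no named inequality, but your version is shorter and more robust.
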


\begin{proof}

In order to minimize $\mathbf{Var}(\beta)$, we need to minimize $\sum_{j=1}^n \frac{{\Delta_{\{i,j\}}}^2}{q_{j|i}}$, because other parts of $\mathbf{Var}(\beta)$ are independent of $j$.
We define 
\[f(q_{1|i},q_{2|i}, \hdots , q_{n|i} ) = \sum_{j=1}^n \frac{{\Delta_{\{i,j\}}}^2}{q_{j|i}} \]
and substitute $q_{n|i}$ by $1-\sum_{j'=1}^{n-1} q_{j'|n}$ and form equations
$\frac{\partial f}{\partial q_{j|i}}=0$, for $1 \leq j \leq n-1$.

We get
\begin{align}
            \frac{{\Delta_{\{i,j\}}}^2}{ {q_{j|i}}^2 } = \frac{{\Delta_{\{i,n\}}}^2}{ \left( 1 - \sum_{j'=1}^{n-1} q_{j|i} \right)^2 } \nonumber \\
\Rightarrow q_{j|i} =  \frac{ 1 - \sum_{j'=1}^{n-1} q_{j|i} }{ \Delta_{\{i,n\}} }  \Delta_{\{i,j\}}  \label{eq:qoptimalproof1} 
\end{align}

Summing $q_{j|i}$'s, for $1 \leq j \leq n-1$, and doing simplifications, we get
\begin{equation*}
\sum_{j=1}^{n-1} q_{j|i} =  \frac{ \sum_{j'=1}^{n-1} {\Delta_{\{i,j\}}} }{ \sum_{j'=1}^{n} {\Delta_{\{i,j\}}} }
\end{equation*}

Putting the value of $\sum_{j=1}^{n-1} q_{j|i} $ into Equation \ref{eq:qoptimalproof1} and doing simplifications,
we get the value of $q_{j|i}$ for the $q$-optimal sampling:
\begin{equation*}
q_{j|i} = \frac{ \Delta_{\{ i,j\}} }{ \sum_{j'=1}^n \Delta_{\{ i,j'\}} } = \frac{ \Delta_{\{ i,j\}} }{2 \Delta_i} 
\end{equation*}
$\blacksquare$
\end{proof}

If vertices $i$ are selected proportional to their degrees, the variance of $\beta$ in the $q$-optimal sampling will be 
\begin{equation}
\label{eq:psemivariancedegree}
\mathbf{Var}(\beta)=
\frac{2m}{9s} \sum_{i=1}^n \frac{{\Delta_i}^2}{\mathbf{deg}(i)}  - \frac{\Delta^2}{s}
\end{equation}

and if they are selected selected uniformly at random, $\mathbf{Var}(\beta)$ will be
\begin{equation}
\label{eq:psemivarianceuniform}
\mathbf{Var}(\beta)=
\frac{n}{9s} \sum_{i=1}^n {\Delta_i}^2 - \frac{\Delta^2}{s}
\end{equation}

The motivation for selecting vertices $i$ proportional to their degrees is that,
as studied in \cite{tsourakakis:icdm},
in most of real-world networks,
vertices of higher degrees have higher number of local triangles.
Then, for every two vertices $i$ and $i'$, if it holds that $deg(i)\geq deg(i')$ implies $\Delta_i \geq \Delta_{i'}$,
it can be shown that selecting vertices $i$ proportional to their degrees gives a better sampling than choosing them uniformly at random.

If the $q$-optimal sampling is used,
in every iteration of the loop in Lines \ref{alg:approximate1:loop1}-\ref{alg:approximate1:loop2} of Algorithm \ref{alg:approximate1},
probabilities $p_i$ and $q_{j|i}$ can be computed in $O(m)$ time.
On the other hand, it takes $O(n)$ time to compute $\beta_t$.
Therefore, time complexity of Algorithm \ref{alg:approximate1} with the $q$-optimal sampling will be $O(sm)$.

The $q$-optimal sampling can provide efficient $1 \pm \epsilon$ approximations, specifically if some information on local triangles of \textit{vertices} is available. 
For example, consider the version of the $q$-optimal sampling where vertices $i$ are selected uniformly at random ($p_i=\frac{1}{n}$).
Suppose that there exists an already known value $\widetilde{\Delta^v}$ such that for every vertex $i$, $\Delta_i\leq \widetilde{\Delta^v}$.
Then, in every iteration of the loop in Lines \ref{alg:approximate1:loop1}-\ref{alg:approximate1:loop2} of  Algorithm \ref{alg:approximate1},
a random variable $X_t$ can be defined as $ X_t = \frac{\beta_t}{n\widetilde{\Delta^v}} =\frac{\Delta_{i(t)} }{3\widetilde{\Delta^v}}$,
where $\Delta_{i(t)}$ is the number of local triangles of the vertex selected in iteration $t$.
We have: $\mathbf E(X_t) = \frac{\Delta}{n\widetilde{\Delta^v}}=\frac{\widehat{\Delta^v}}{\widetilde{\Delta^v}}$,
where $\widehat{\Delta^v}$ is the average number of local triangles of vertices.
By Chernoff bound we obtain:
\begin{equation}
\mathbf{Pr}\left[ \frac{1}{s} \sum_{t=1}^s X_t - \frac{\widehat{\Delta^v}}{\widetilde{\Delta^v}} > \epsilon  \frac{\widehat{\Delta^v}}{\widetilde{\Delta^v}} \right]
\leq \exp \left( -\frac{\epsilon^2 s\widehat{\Delta^v}}{2\widetilde{\Delta^v}}  \right) 
\end{equation}
If $s= \varOmega(\frac{\widetilde{\Delta^v} \log n}{\widehat{\Delta^v} \epsilon^2})$,
then $\frac{n\widetilde{\Delta^v}}{s} \sum_{t=1}^s X_t$ approximates $\Delta$ within a factor of $\epsilon$ with probability at least $1-n^{-c}$ for any constant $c$.
This gives an $O( \frac{\widetilde{\Delta^v} m \log n}{\widehat{\Delta^v} \epsilon^2}  )$ time algorithm which approximates $\Delta$ within a factor of $\epsilon$.
Specifically, if $\widetilde{\Delta^v}$ is greater than $\widehat{\Delta^v}$ only by a factor of a constant,
time complexity of the algorithm will be $O(\frac{m \log n}{\epsilon^2})$.

\subsection{Edge sampling}

In the edge sampling, first
a vertex $i$ is selected by some strategy (which can be done in $O(1)$ time).
Then, a neighbor $j$ of $i$ is selected by some (probably different) strategy which also can be done in $O(1)$ time.
Since in this sampling computation of probabilities is done in $O(1)$ time, time complexity of the algorithm is $O(sn)$. 

For example, $i$ can be selected uniformly at random (therefore, $p_i=\frac{1}{n}$), then,
for every $j$, if $j$ is a neighbor of $i$, $q_{j|i}$ is equal to $\frac{1}{\mathbf{deg}(i)}$; otherwise it is $0$.
This case is similar to the methods which uniformly sample an edge and count the number of triangles incident to it and scale the result.
The first algorithm proposed in \cite{jowhari:proceedings} and partially the algorithm of \cite{buriol:proceedings}
are examples of such methods.
In this case, variance of $\beta$ will be
\begin{align}
\label{eq:uniformvariance2}
\mathbf{Var} (\beta)= &  \frac{n}{36s} \sum_{i=1}^n \left( \mathbf{deg}(i)\sum_{j=1}^n  {\Delta_{\{i,j\}}} ^2 \right) - \frac{\Delta^2}{s}
\end{align}

In the second case of the edge sampling, $i$ is chosen with probability $p_i=\frac{\mathbf{deg}(i)}{2m}$.
Then, similar to the first case, for every vertex $j$, if $j$ is a neighbor of $i$, $q_{j|i}$ will be $\frac{1}{\mathbf{deg}(i)}$.
Otherwise, it will be $0$.
Variance of $\beta$ in this case is:
\begin{equation}
\label{eq:degreevariance2}
\mathbf{Var} (\beta) = \frac{m}{18s} \sum_{i=1}^n  \sum_{j=1}^n  {\Delta_{\{i,j\}}} ^2 - \frac{\Delta^2}{s}
\end{equation}

Similar to the $q$-optimal sampling, the edge sampling can provide efficient $1 \pm \epsilon$ approximations,
specifically if some information on local triangles of \textit{edges} is available.
For example, consider the second case and suppose that there exists a known value $\widetilde{\Delta^e}$ such that for every edge $\{i,j\}$, $\Delta_{\{i,j\}}\leq \widetilde{\Delta^e}$.
Then, in every iteration of the loop in Lines \ref{alg:approximate1:loop1}-\ref{alg:approximate1:loop2} of  Algorithm \ref{alg:approximate1},
a random variable $X_t$ can be defined as $ X_t = \frac{\beta_t}{m\widetilde{\Delta^e}} =\frac{\Delta_{\{i,j\}(t)} }{3\widetilde{\Delta^e}}$,
where $\Delta_{\{i,j\}(t)}$ is the number of local triangles of the edge $\{i,j\}$ selected in iteration $t$.
We have: $\mathbf E(X_t) = \frac{\Delta}{m\widetilde{\Delta^e}}=\frac{\widehat{\Delta^e}}{\widetilde{\Delta^e}}$,
where $\widehat{\Delta^e}$ is the average number of local triangles of edges.
Similarly, by Chernoff bound we obtain:
\begin{equation}
\mathbf{Pr}\left[ \frac{1}{s} \sum_{t=1}^s X_t - \frac{\widehat{\Delta^e}}{\widetilde{\Delta^e}} > \epsilon  \frac{\widehat{\Delta^e}}{\widetilde{\Delta^e}} \right]
\leq \exp \left( -\frac{\epsilon^2 s\widehat{\Delta^e}}{2\widetilde{\Delta^e}}  \right) 
\end{equation}
and if $s= \varOmega(\frac{\widetilde{\Delta^e} \log n}{\widehat{\Delta^e} \epsilon^2})$,
then $\frac{m\widetilde{\Delta^e}}{s} \sum_{t=1}^s X_t$ approximates $\Delta$ within a factor of $\epsilon$ with probability at least $1-n^{-c}$ for any constant $c$.
This gives an $O( \frac{\widetilde{\Delta^e} n \log n}{\widehat{\Delta^e} \epsilon^2}  )$ time algorithm which approximates $\Delta$ within a factor of $\epsilon$.
Specifically, if $\widetilde{\Delta^e}$ is greater than $\widehat{\Delta^e}$ only by a factor of a constant,
time complexity of the algorithm will be $O(\frac{n \log n}{\epsilon^2})$.

\section{Triangle counting in streams}
\label{sec:stream}

In many applications like World Wide Web and social networks,
the dataset is too large to load it into the main memory.
A widely used approach to address this problem is the use of the data stream model.
A data stream is an ordered sequence in which
data arrives one item at a time,
and the algorithm has access to limited computation and storage capabilities. 

In this section, we extend the algorithm presented in Section \ref{section:approximatealgorithm} to streams.
While our focus is the $q$-optimal sampling where vertices $i$ are selected uniformly at random,
the other sampling methods can be extended to streams in a similar way. Due to lack of space, we here omit details.
If the number of vertices of the graph, $n$, is already known, our algorithm will need 2 passes over the data.  
Otherwise, an extra pass will be needed to find it.
First, $s$ integers (vertices) $i$ are selected independently at random with uniform probability $\frac{1}{n}$.
Then:
\begin{itemize}
\item
During the first pass, for every $i$, the \textit{neighborhood vector} of $i$, denoted by $\mathcal I^i$, is formed. 
The neighborhood vector of $i$, shows which vertices of the graph are a neighbor of $i$ and which ones are not.
For every vertex $j$, if $\{i,j\}$ is an edge of the graph, ${\mathcal I^i}_j$ is set to $1$, otherwise, it is set to $0$.  
\item
During the second pass, for every vertex $i$ and for all vertices $j$,
the number of local triangles of the edge between $i$ and $j$ is calculated.
To do so, for every $i$ a vector $\mathcal P^i$ of size $n$ is used,
where ${\mathcal P^i}_j$ stores the number of local triangles of $\{i,j\}$.
When an edge $A_{jd}$ is visited, 
if there exists an edge between $i$ and $j$ (i.e. ${\mathcal I^i}_j=1$) and an edge between $i$ and $d$ (i.e. ${\mathcal I^i}_d=1$),
a triangle consisting of the vertices $i$, $j$ and $d$ is found.
Therefore, ${\mathcal P^i}_j$ and ${\mathcal P^i}_d$ and $z^i$ are increased by $1$.
$z^i$ is used to store the number of local triangles of $i$.
\end{itemize}

During these two passes, the information required for $q$-optimal sampling and approximate triangle counting are gathered.
Then, for every $i$,
a $j \in \{ 1,\hdots,n\}$ is selected with probability $q_{j|i}=\frac{\Delta_{\{i,j\}}}{2\Delta_i} = \frac{\mathcal P_j}{2z^i} $
and the approximate number of triangles is calculated.

Every pass needs $O(sn)$ space from the main memory.
After twice passing over the stream and calculating $\mathcal I^i$s, $\mathcal P^i$s and $z^i$s,
the approximate number of triangles can be determined in $O(1)$ time.
We can store random variables $X_t$, if $\widetilde{\Delta^v}$ is known.
In this case, we will need $O( \frac{\widetilde{\Delta^v} n \log n}{\widehat{\Delta^v} \epsilon^2}  )$ space
to provide a $1 \pm \epsilon $ approximation. Therefore, we can present the following theorem:
\begin{theorem}
\label{theorem:stream}
There is a 2-pass algorithm (if $n$ is already known, otherwise it will need 3 passes)
to count the number of triangles in a stream of edges which needs  
$O(sn)$ memory space and constant update time and its error guarantee obeys Equation \ref{eq:psemivarianceuniform}.
Specifically, with space usage $O( \frac{\widetilde{\Delta^v} n \log n}{\widehat{\Delta^v} \epsilon^2}  )$,
it gives a $1 \pm \epsilon $ approximation.
\end{theorem}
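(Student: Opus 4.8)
The plan is to verify that the streaming procedure described above is a faithful implementation of Algorithm \ref{alg:approximate1} under the $q$-optimal sampling with uniform vertex selection, and then to read off the pass count, space, update time, and error guarantee from that correspondence together with the results already established in Section \ref{sec:samplings}.

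First I would establish correctness: that after the two data passes the stored quantities satisfy $z^i=\Delta_i$ and $\mathcal{P}^i_j=\Delta_{\{i,j\}}$ for each sampled vertex $i$. The key observation is that every triangle $\{i,j,d\}$ incident to $i$ contains exactly one edge, namely $\{j,d\}$, that is not incident to $i$. Hence, when the stream presents this edge during the second pass and both $\mathcal{I}^i_j=1$ and $\mathcal{I}^i_d=1$, the triangle is detected exactly once, and incrementing $z^i$, $\mathcal{P}^i_j$ and $\mathcal{P}^i_d$ yields the claimed exact counts. With these values in hand, choosing $j$ with probability $q_{j|i}=\mathcal{P}^i_j/(2z^i)$ reproduces precisely the $q$-optimal conditional distribution of Lemma \ref{lemma:psemiuniform}; in fact, substituting $p_i=\tfrac1n$ and this $q_{j|i}$ into Equation \ref{eq:betat} collapses the estimator to $\beta_t=\tfrac{n\,z^i}{3}=\tfrac{n\Delta_i}{3}$, independent of the drawn $j$. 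Thus the streaming estimator has the same distribution as $\beta$ in Algorithm \ref{alg:approximate1}, and by Lemma \ref{lemma:lemma1} it is unbiased with variance given exactly by Equation \ref{eq:psemivarianceuniform}.

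Next I would account for the resources. The optional initial pass is needed only to learn $n$, so that the uniform probabilities $\tfrac1n$ and the vectors can be allocated; if $n$ is given this pass is omitted, giving $2$ passes rather than $3$. Each of the $s$ sampled vertices stores a neighborhood vector $\mathcal{I}^i$, a counter vector $\mathcal{P}^i$, and a scalar $z^i$, all of size $O(n)$, for a total of $O(sn)$ cells. For the update time, on the arrival of an edge the work per sampled vertex is a constant number of table look-ups and increments, i.e. $O(1)$ per maintained sample; after the passes the final estimate is assembled in $O(1)$ time from the stored counters.

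Finally, for the $1\pm\epsilon$ claim I would invoke the Chernoff argument already carried out for the uniform $q$-optimal sampling: defining $X_t=\Delta_{i(t)}/(3\widetilde{\Delta^v})$ and taking $s=\varOmega\!\big(\tfrac{\widetilde{\Delta^v}\log n}{\widehat{\Delta^v}\epsilon^2}\big)$ yields a $1\pm\epsilon$ approximation with probability at least $1-n^{-c}$, and substituting this $s$ into the $O(sn)$ space bound gives the stated $O\!\big(\tfrac{\widetilde{\Delta^v} n \log n}{\widehat{\Delta^v}\epsilon^2}\big)$ space. The main obstacle, and the one place the argument must be made carefully, is the counting-multiplicity check establishing $z^i=\Delta_i$: it relies on each undirected edge appearing exactly once in the stream, so that no triangle is double-counted and the streaming estimator matches the in-memory one exactly. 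The ``constant update time'' should likewise be read as constant per maintained sample rather than constant in $s$.
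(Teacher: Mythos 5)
Your proposal is correct and follows essentially the same route as the paper, whose justification for Theorem \ref{theorem:stream} is simply the two-pass procedure and resource accounting described immediately before the theorem statement. You in fact fill in details the paper only asserts — verifying that $z^i=\Delta_i$ and $\mathcal{P}^i_j=\Delta_{\{i,j\}}$ via the one-opposite-edge-per-triangle observation, noting that the estimator collapses to $\tfrac{n\Delta_i}{3}$, and correctly flagging that the update time is constant per maintained sample rather than per edge — but these are refinements of the same argument, not a different one.
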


Similar results can be presented for other samplings.

\section{Related work}
\label{sec:relatedwork}

Buriol et al. \cite{buriol:proceedings} presented one of the first approximate triangle counting algorithms.
In their method, an edge and a vertex are selected by random and it is checked whether they form a triangle or not.
Then, the fraction of tests which form a triangle is scaled and returned as an estimation of the number of triangles.
In this method, if
\begin{equation*}
s=\mathbf{log}(\frac{1}{\delta}) \frac{1}{\epsilon^2} \left( \frac{T_0+T_1+T_2+T_3}{T_3} \right)  
\end{equation*}
independent trials are done, where $T_i$ is the number of triples of vertices with $i$ edges,
with probability at least $1-\delta$, the estimated number of triangles is between $(1-\epsilon)T_3$ and $(1+\epsilon)T_3$.
However, $T_0+T_1+T_2$ can be very large compared to $T_3$.

The other method which is efficient only for triangle-dense graphs is \cite{yossef:soda}.
Their method is based on reducing triangle counting to estimating the zero-th, first and second
frequency moments.
They showed that there is a streaming algorithm that for any adjacency stream of a graph,
computes an $(\epsilon-\delta)$ approximation of the number of triangles using space:
\begin{equation*}
O\left( \frac{1}{\epsilon^3} \times \mathbf{log}\frac{1}{\delta}\times\left( \frac{T_1+T_2+T_3}{T_3} \right)^3 \times \textbf{log} n \right) 
\end{equation*}

Tsourakakis \cite{tsourakakis:icdm} proposed a triangle counting algorithm based on
computing the largest eigenvalues of the adjacency matrix of an undirected graph
to approximate both the total as well as the local number
of triangles in the graph.
Tsourakis's method exploits the following property:
the total number of triangles in an undirected graph is
$
\frac{1}{6} \sum_{j=1}^n {\lambda_j}^3
$,
where $\lambda_j$ is the $j$-th eigenvalue of $A$.
Using the Lanczos method \cite{cullum:book}, the eigenvalues are generated from the biggest one to the smallest one.
An approximation is done when the smallest generated eigenvalue contributes very little to the total number of triangles.
However, it is known that time complexity of finding eigenvalues is the same as time complexity of matrix multiplication.
For approximating the number of local triangles of a vertex $i$, he exploited the following property:
$\Delta_i=\frac{\sum_{j=1}^n {\lambda_j}^3 {u_{i,j}}^2 }{2}$,
where $\lambda_j$ is the $j$-th eigenvalue and $u_{i,j}$ is the $j$-th entry of the $i$-th eigenvector of $A$.

In \cite{tsourakakis:proceedings}, the authors proposed the \textsc{DOULION} algorithm for approximate triangle counting.
In this method, every edge $e$ of the graph is removed with a sparsification probability $1-p$.
If it survives, the weight $\frac{1}{p}$ is assigned to it.
Then, the number of triangles in the original graph is approximated
as the number of triangles in the sparsified graph times $\frac{1}{p^3}$.
The following error bound was provided for this method:
\begin{equation*}
\mathbf{Var}(\Delta) = \frac{\Delta(p^3-p^6)+2k(p^5-p^6)}{p^6} 
\end{equation*}
where $k$ is the number of pairs of triangles which are edge-disjoint.


The randomized algorithm of \cite{pagh:jrnl} colors the vertices of the graph with $N=\frac{1}{p}$ colors uniformly at random,
counts triangles whose vertices have the same color, and scales that count appropriately.
The authors showed that for enough large values of $p$, their estimation of the number of triangles is concentrated around its expectation.
In \cite{kolountzakis:jrnl}, the authors combined the sparsification techniques and the the idea of vertex partitioning
(into \textit{high degrees} and \textit{low degrees}) presented in \cite{alon:jrnl}.
They developed an $O(m+\frac{ m^{\frac{3}{2}} \log n }{\Delta \epsilon^2})$ time
$(1\pm \epsilon)$-approximation algorithm.

In \cite{becchetti:jrnl}, the authors studied the problem of approximate local triangle counting in large graphs.
Their approximation algorithms are based on the idea of min-wise independent permutations \cite{Broder:proceedings}.
Their algorithms operate in a semi-streaming fashion, using $O(n)$ space in main memory and performing 
$O(\mathbf{log} n)$ passes over the edges of the graph.

\section{Conclusions}
\label{sec:conclusions}

In this paper, we proposed a new randomized algorithm for approximate triangle counting.
The algorithm provides a general framework
which can be adopted with different sampling techniques and give methods with different time complexities and error bounds.
For example, it can be adopted with the \textit{$q$-optimal sampling} and the \textit{edge sampling}, presented in the paper,
and give linear time algorithms for approximate triangle counting. 
We showed that if an upper bound $\widetilde{\Delta^e}$ is known for the number of triangles incident to every edge,
the proposed method provides an $1\pm \epsilon$ approximation
which runs in $O( \frac{\widetilde{\Delta^e} n \log n}{\widehat{\Delta^e} \epsilon^2}  )$ time,
where $\widehat{\Delta^e}$ is the average number of triangles incident to an edge.
Also, if an upper bound $\widetilde{\Delta^v}$ is known for the number of triangles incident to every vertex,
the proposed method provides an $1\pm \epsilon$ approximation
which runs in $O( \frac{\widetilde{\Delta^v} m \log n}{\widehat{\Delta^v} \epsilon^2}  )$ time,
where $\widehat{\Delta^v}$ is the average number of triangles incident to a vertex.

Finally we showed the algorithm can be extended to streams.
Then it, for example, will perform 2 passes over the data (if the size of the graph is known, otherwise it needs 3 passes)
and will use $O(sn)$ space.  


\bibliographystyle{plain}
\bibliography{approximate}

\end{document}